\newcommand{\avgd}{\bar{d}}
\newcommand{\Elr}{{\vec{E}}_{LR}}
\newcommand{\Erl}{{\vec{E}}_{RL}}
\newcommand{\de}{\vec{e}}
\newcommand{\dG}{\vec{G}}
\newcommand{\expt}{\mathbb{E}}
\newcommand{\E}{\expt}
\newcommand{\comment}[1]{}
\theoremstyle{plain}
\newtheorem{lem}{Lemma}[section]
\newtheorem{theorem}[lem]{Theorem}
\newtheorem{lemma}[lem]{Lemma}
\newtheorem{observation}[lem]{Observation}
\theoremstyle{remark}
\newtheorem{remark}[lem]{Remark}
\title{An entropy based proof of the Moore bound for irregular graphs}
\author{S. Ajesh Babu\footnote{This work was done while the author was at the School of Technology and Computer Science, 
Tata Institute of Fundamental Research, Homi Bhabha Road, Mumbai 400005, India.} 
\\\texttt{ajesh@yahoo-inc.com}\\Yahoo! Labs,\\Bangalore
\and
Jaikumar Radhakrishnan\footnote{School of Technology and Computer Science, 
Tata Institute of Fundamental Research, Homi Bhabha Road, Mumbai 400005, India.} 
\\\texttt{jaikumar@tifr.res.in}\\Tata Institute of Fundamental Research,\\Mumbai}
\date{\today}
\begin{document}
\maketitle
\begin{abstract}
We provide proofs of the following theorems 
by considering the entropy of random walks.\\
\begin{description}
 \item \textbf{Theorem 1.(Alon, Hoory and Linial)}
Let $G$ be an undirected simple graph with $n$ vertices, girth $g$, minimum degree at least $2$ and
average degree $\avgd$.
\begin{description}
\item[\mbox{\em Odd girth:}] If $g=2r+1$, then $n\geq 1+\avgd\displaystyle\sum_{i=0}^{r-1}(\avgd-1)^{i}$.
\item[\mbox{\em Even girth:}] If $g=2r$, then $n\geq 2\displaystyle\sum_{i=0}^{r-1} (\avgd-1)^i$.
\end{description}
 \item \textbf{Theorem 2.(Hoory)} Let $G=(V_L,V_R,E)$ be a bipartite
graph of girth $g=2r$, with $n_L=|V_L|$ and $n_R=|V_R|$, minimum
degree at least $2$ and the left and right average degrees $d_L$ and
$d_R$. Then,
\begin{eqnarray*}
n_L&\geq& \sum_{i=0}^{r-1}(d_R-1)^{\lceil \frac{i}{2}\rceil}(d_L-1)^{\lfloor \frac{i}{2}\rfloor},\\
n_R&\geq& \sum_{i=0}^{r-1}(d_L-1)^{\lceil \frac{i}{2}\rceil}(d_R-1)^{\lfloor \frac{i}{2}\rfloor}.
\end{eqnarray*}
\end{description}
\end{abstract}

\section{Introduction}
The Moore bound (see Theorem \ref{thm:moore}) gives a lower bound on the order of any simple undirected graph, based on its 
minimum degree and girth. Alon, Hoory and Linial~\cite{AHL02} showed that the same bound holds with the minimum degree replaced by 
the average degree. Later, Hoory~\cite{Hoory02} obtained a better bound for simple bipartite graphs. 
We reprove the results of Alon, Hoory and Linial~\cite{AHL02} and Hoory~\cite{Hoory02} using information theoretic arguments 
based on non-returning random walks on the graph.

The paper has three sections: In Section \ref{sec:notation} we introduce the relevant notation and 
terminology. In Section \ref{sec:moore_irregular}, we present the information theoretic 
proof of the result of Alon, Hoory and Linial~\cite{AHL02}; in Section \ref{sec:moore_bipartite}, we present a similar proof of 
the result of Hoory~\cite{Hoory02} for bipartite graphs.

\section{Notation} 
\label{sec:notation}
For an undirected simple graph $G=(V,E)$, let $\dG=(V,\vec{E}),$ be the directed version of $G$, where for each 
undirected edge of the form $\{u,v\}$ in $E$, we place two directed edges in $\vec{E}$, one of the form $(u,v)$ and another of the form 
$(v,u)$. Similarly, for an undirected bipartite graph $G=(V_L,V_R,E)$, let $\dG=(V_L,V_R,\Elr\cup \Erl)$ be the directed version 
of $G$, where for each undirected edge of the form $\{u,v\}$ in $E$, with $u\in V_L$ and $v\in V_R$, we place one directed edge of 
the form $(u,v)$ in $\Elr$, and another of the form $(v,u)$ in $\Erl$.

We will consider {\em non-returning} walks on $\dG$, that is, walks where the edges corresponding to the same undirected edge of 
$G$ do not appear in succession. For a vertex $v$, let $n_i(v)$ denote the number of non-returning walks in $\dG$ starting at $v$ 
and consisting of $i$ edges. For an edge $\de$, let $n_i(\de)$ denote the number of non-returning walks in $\dG$ starting with 
$\de$ and consisting of exactly $i+1$ edges (including $\de$).

\section{Moore bound for irregular graphs}
\label{sec:moore_irregular}
In Section~\ref{subsec:moore}, we recall the proof of the Moore bound;
in Section~\ref{subsec:ahl}, we review and reprove the theorem of
Alon, Hoory and Linial~\cite{AHL02} assuming the Lemma \ref{lm:ahl}.
In Section \ref{subsec:entropy}, we prove this lemma using an entropy
based argument.
\subsection{Proof of the Moore bound}
\label{subsec:moore}
The Moore bound provides a lower bound for the order of a graph in terms of its minimum degree and girth.
\begin{theorem}[{The Moore bound~\cite[p.~180]{B93}}]\label{thm:moore}Let $G$ be a simple undirected graph with $n$ vertices, 
minimum degree $\delta$ and girth $g$.
\begin{description}
\item[\mbox{\em Odd girth:}] If $g=2r+1$, then $\displaystyle n \geq 1+ \delta\sum_{i=0}^{r-1}(\delta-1)^{i}$.
\item[\mbox{\em Even girth:}] If $g=2r$, then
$\displaystyle  n \geq 2\sum_{i=0}^{r-1} (\delta-1)^i.$
\end{description}
\end{theorem}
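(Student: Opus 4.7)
The plan is the classical breadth-first search argument: pick a root inside $G$, grow a BFS to depth $r$, use the girth hypothesis to show that the resulting layers form a tree isometrically embedded in $G$, and then count the level sizes.

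For the odd case $g = 2r+1$, I would root the BFS at an arbitrary vertex $v_0$ and write $L_i$ for the set of vertices at distance exactly $i$ from $v_0$. The girth hypothesis prevents any cycle from fitting inside the ball of radius $r$ around $v_0$, and in particular forbids each of the following: an internal edge inside $L_i$ for $i \leq r-1$ (which would give an odd closed walk of length at most $2r-1$), two distinct vertices of $L_i$ sharing a neighbor in $L_{i+1}$ for $i \leq r-1$ (an even closed walk of length at most $2r$), and a vertex of $L_i$ with more than one neighbor in $L_{i-1}$ (again a short even closed walk). Consequently $|L_0|=1$, $|L_1| \geq \delta$, and $|L_{i+1}| \geq (\delta-1)|L_i|$ for $1 \leq i \leq r-1$, and summing $|L_0|+|L_1|+\cdots+|L_r|$ gives the desired inequality.

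For the even case $g = 2r$, the same approach works once the BFS is rooted at an \emph{edge} $e_0 = \{u_0,v_0\}$ rather than at a vertex. Define $A_i$ to be the set of vertices at distance exactly $i$ from the subgraph $e_0$, so that $|A_0| = 2$. The analogous girth accounting (each bad configuration would force a cycle of length at most $2r-1 < g$) shows that $A_0,\ldots,A_{r-1}$ are pairwise disjoint and that every vertex of $A_i$, for $0 \leq i \leq r-2$, contributes at least $\delta-1$ distinct new vertices to $A_{i+1}$. Induction gives $|A_i| \geq 2(\delta-1)^i$, and summing over $i = 0,\ldots,r-1$ yields the bound.

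The main obstacle in both parts is exactly the girth accounting, i.e.\ converting the hypothesis on $g$ into the structural claim that the depth-$r$ BFS neighborhood has no non-tree edges and no collisions; once this is in place, the rest is a routine geometric sum. The decision to root at a vertex in the odd case and at an edge in the even case is dictated by the parity of $g$ together with the need for the base count ($1$ or $2$) to match the leading coefficient of the stated bound.
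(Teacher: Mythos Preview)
Your proposal is correct and is essentially the same argument the paper gives: the paper phrases it as counting non-returning walks $n_i(v)$ from a vertex (odd case) or $n_i(\vec e_1)+n_i(\vec e_2)$ from the two orientations of an edge (even case), observes that the girth forces distinct walks of length at most $r$ to end at distinct vertices, and then uses $n_i(v)\ge\delta(\delta-1)^{i-1}$ and $n_i(\vec e)\ge(\delta-1)^i$. Your BFS layers $L_i$ and $A_i$ are exactly the endpoint sets of these walks, and your ``girth accounting'' is the same distinctness observation, so the two proofs coincide.
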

The key observation in the proof of the Moore bound is the following.
If the girth is $2r+1$, then two distinct non-returning walks of
length at most $r$ starting at a vertex $v$ lead to distinct
vertices. Similarly, if the girth is $2r$, then non-returning walks of
length at most $r$ starting with (some directed version of) an edge
$e$ lead to distinct vertices. We will need this observation
again later, so we record it formally.
\begin{observation}\label{ob:moore}
Let $G$ be an undirected simple graph with $n$ vertices and girth $g$.
\begin{description}
\item[\mbox{\em Odd girth:}] Let $g=2r+1$. Then, for all vertices $v$,
$n \geq n_0(v) + n_1(v) + \cdots + n_r(v)$.
\item[\mbox{\em Even girth:}] Let $g=2r$. Let $e$ be an edge of $G$ and suppose $\de_1$ and $\de_2$ are its 
directed versions in $\dG$. Then, \[ n \geq \sum_{i=0}^{r-1} [n_i(\de_1) + n_i(\de_2)].\]
\end{description}
\end{observation}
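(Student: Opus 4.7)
The plan is to prove both parts by showing that the non-returning walks in question have pairwise distinct endpoints; since there are only $n$ vertices, this immediately gives the stated bounds. Concretely, for the odd girth case I will show that the map from non-returning walks of length at most $r$ starting at $v$ to their endpoints is injective; for the even girth case I will show injectivity of the endpoint map on the union of non-returning walks of length at most $r$ starting with $\de_1$ or with $\de_2$.

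For the odd girth case, suppose for contradiction that $W_1 \neq W_2$ are two non-returning walks from $v$ of lengths $\ell_1,\ell_2 \leq r$ ending at the same vertex $w$. Consider the closed walk $W_1 \cdot \overline{W_2}$ at $v$, which has length $\ell_1 + \ell_2 \leq 2r$. Reduce this closed walk by iteratively cancelling backtracks (pairs of consecutive edges that are the reverse of each other, including across the wrap-around). The reduction ends in one of two states. If it collapses to the trivial walk, then $W_1$ and $W_2$ are equivalent in the path groupoid; since both are already non-returning (hence fully reduced as paths from $v$ to $w$), this forces $W_1 = W_2$, contradicting distinctness. Otherwise the reduction yields a non-trivial reduced closed walk of length at most $2r < g$; any non-trivial non-returning closed walk of length $\ell$ must contain a cycle of length at most $\ell$ (a standard observation: iteratively shorten by identifying repeated vertices, using the non-returning property to rule out shrinking to a trivial walk), giving a cycle of length less than $g$, contradicting the girth hypothesis.

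For the even girth case, let $W_1, W_2$ be two distinct non-returning walks of length at most $r$, each starting with $\de_1$ or $\de_2$, both ending at the same vertex $w$. If both start with the same directed edge $\de_i$, strip this common first edge off to obtain distinct non-returning walks $W_1', W_2'$ of length at most $r-1$ from a common starting vertex ending at $w$; applying the odd-girth style argument to $W_1' \cdot \overline{W_2'}$ produces either triviality (contradicting $W_1 \neq W_2$) or a cycle of length at most $2r-2 < g$. If instead $W_1$ starts with $\de_1 = (u,v)$ and $W_2$ with $\de_2 = (v,u)$, note that $W_1 \cdot \overline{W_2}$ is a walk from $u$ to $v$ whose first and last directed edges both equal $\de_1$; cancelling this final backtrack (the trailing $\de_1$ against the leading $\de_1$ after wrap-around) yields a closed walk at $v$ of length $\ell_1 + \ell_2 - 1 \leq 2r-1 < g$. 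Reducing this walk again produces either triviality (again contradicting distinctness, since the stripped prefixes start with distinct edges out of their respective vertices) or a cycle of length less than $g$, which is the desired contradiction.

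The main obstacle lies in the even-girth cross-case: one must verify that the closed walk obtained after stripping the common $\de_1$ is genuinely non-returning at each junction so that, after reduction, it actually encodes a cycle rather than collapsing in an uncontrolled way. In particular, the junction at $w$ may introduce a backtrack when the last edges of $W_1$ and $W_2$ coincide, forcing an inductive stripping of common suffixes; one has to check that this stripping cannot exhaust both walks simultaneously (it cannot, since the stripped walks start with distinct edges $\de_1$ and $\de_2$ out of $u$ and $v$ respectively), so the process must terminate with a non-returning closed walk of length strictly less than $g$, yielding the required short cycle.
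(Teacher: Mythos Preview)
Your approach---showing that the relevant non-returning walks have pairwise distinct endpoints---is exactly what the paper asserts (in one sentence, without further justification) immediately before stating the observation, so the strategy is the same and your case analysis correctly fills in the omitted details. The one expository slip is in the even-girth cross-case: $W_1\cdot\overline{W_2}$ runs from $u$ to $v$ and is not closed, so there is no ``wrap-around'' to speak of; what you are really doing is deleting one of the two copies of $\de_1$ (say the first edge) to obtain a genuine closed walk at $v$ of length $\ell_1+\ell_2-1\le 2r-1<g$, after which your reduction argument goes through.
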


\begin{proof}[Proof of Theorem \ref{thm:moore}] The claim follows immediately from Observation~\ref{ob:moore} by noting that 
for such a graph $G$, for all vertices $v \in V$ and edges
$\de \in \vec{E}$,
\begin{eqnarray}
\label{eq:moore1}n_i(v) & \geq & \delta(\delta-1)^{i-1} \ \mbox{(for $i\geq 1$)},\ \  n_0(v)=1;\\
\label{eq:moore2}n_i(\de) & \geq & (\delta-1)^i \ \ \mbox{(for $i\geq 0$)}.
\end{eqnarray}
\end{proof}

\subsection{The Alon-Hoory-Linial bound}
Alon, Hoory and Linial showed that the bound in
Theorem~\ref{thm:moore} holds for any undirected graph even when the
minimum degree $\delta$ is replaced by the average degree $\avgd$.
\label{subsec:ahl}
\begin{theorem}[Alon, Hoory and Linial~\cite{AHL02}] \label{thm:ahl}
Let $G$ be an undirected simple graph with $n$ vertices, girth $g$, minimum degree at least $2$ and
average degree $\avgd$.
\begin{description}
\item[\mbox{\em Odd girth:}] If $g=2r+1$, then $\displaystyle n\geq 1+\avgd\sum_{i=0}^{r-1}(\avgd-1)^{i}$.
\item[\mbox{\em Even girth:}] If $g=2r$, then $\displaystyle n\geq 2\sum_{i=0}^{r-1} (\avgd-1)^i$.
\end{description}
\end{theorem}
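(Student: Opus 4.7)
The plan is to reduce Theorem~\ref{thm:ahl} to an averaged analogue of inequalities~\eqref{eq:moore1}--\eqref{eq:moore2}, and then establish that analogue by a short entropy calculation on a non-returning random walk. Write $N_i = \sum_{\de \in \vec{E}} n_i(\de)$ for the total number of non-returning walks on $\dG$ consisting of $i+1$ directed edges. I anticipate that Lemma~\ref{lm:ahl} is essentially the bound
\[
N_i \;\ge\; |\vec{E}|\,(\avgd - 1)^i \qquad \text{for every } i\ge 0,
\]
which is the ``average-degree'' version of~\eqref{eq:moore2} after summing over starting edges.

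Granting this lemma, Theorem~\ref{thm:ahl} follows by averaging Observation~\ref{ob:moore}. For the even case $g = 2r$, summing the inequality $n \ge \sum_{i=0}^{r-1}[n_i(\de_1)+n_i(\de_2)]$ over the $|E| = n\avgd/2$ undirected edges of $G$ gives $n|E| \ge \sum_{i=0}^{r-1} N_i$, and the lemma then yields $n \ge 2\sum_{i=0}^{r-1}(\avgd-1)^i$. For the odd case $g = 2r+1$, I would average $n \ge \sum_{i=0}^{r} n_i(v)$ over $v$, use $n_0(v)=1$ together with the identity $\sum_v n_i(v) = N_{i-1}$ for $i \ge 1$ (each length-$i$ non-returning walk from $v$ is determined by its first directed edge out of $v$ and the subsequent continuations counted by $n_{i-1}(\cdot)$), and divide by $n$ to obtain $n \ge 1 + \frac{1}{n}\sum_{j=0}^{r-1} N_j \ge 1 + \avgd\sum_{j=0}^{r-1}(\avgd-1)^j$.

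To prove the lemma I would analyze the random non-returning walk $W = (\de_0, \de_1, \ldots, \de_i)$ constructed as follows: let $\de_0$ be uniform on $\vec{E}$, and for each $j \ge 1$ let $\de_j$ be chosen uniformly among the $d(v_{j-1}) - 1$ directed edges leaving $v_{j-1}$ other than the reverse of $\de_{j-1}$, where $v_{j-1}$ denotes the head of $\de_{j-1}$ (the hypothesis $\delta \ge 2$ ensures at least one such choice always exists). A direct check shows that the uniform distribution on $\vec{E}$ is the stationary distribution of this Markov chain, so the marginal of $v_{j-1}$ is $\pi(v) = d(v)/|\vec{E}|$ for every $j \ge 1$. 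By the chain rule for Shannon entropy,
\[
H(W) \;=\; \log|\vec{E}| \;+\; i\sum_{v} \pi(v)\log(d(v) - 1),
\]
while since $W$ is supported on the set counted by $N_i$, we also have $H(W) \le \log N_i$.

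The step that I expect to be the main obstacle is lower bounding $\sum_v \pi(v)\log(d(v)-1) = \frac{1}{n\avgd}\sum_v d(v)\log(d(v)-1)$ by $\log(\avgd - 1)$, since Jensen's inequality applied naively to the concave function $\log$ goes the wrong way. The resolution is to apply Jensen instead to the function $f(x) = x\log(x-1)$ under the uniform distribution on vertices. Since $f''(x) = (x-2)/(x-1)^2$ is nonnegative on $[2,\infty)$, $f$ is convex on that range and the hypothesis $d(v) \ge 2$ gives
\[
\tfrac{1}{n}\sum_v d(v)\log(d(v)-1) \;\ge\; \avgd\log(\avgd-1),
\]
which is exactly the inequality needed. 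Combining this with the chain-rule identity yields $H(W) \ge \log\bigl(|\vec{E}|(\avgd-1)^i\bigr)$, and hence $N_i \ge |\vec{E}|(\avgd-1)^i$, completing the proof of the lemma and of the theorem.
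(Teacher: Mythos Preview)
Your argument is correct and uses the same entropy-of-non-returning-walk idea as the paper. The one point of departure is the odd-girth case: the paper proves a separate vertex version of the lemma, namely $\E_\pi[n_i(v)]\ge \avgd(\avgd-1)^{i-1}$ with $v\sim\pi$, via the convexity of $x\mapsto x\log x(x-1)^{i-1}$, and then averages Observation~\ref{ob:moore} under $\pi$. You instead average Observation~\ref{ob:moore} \emph{uniformly} over vertices and invoke the identity $\sum_v n_i(v)=N_{i-1}$ to reduce everything to the single edge bound $N_j\ge |\vec{E}|(\avgd-1)^j$. This is a mild streamlining: you need only the convexity of $x\log(x-1)$ and only one lemma, at the cost of not obtaining the paper's slightly sharper statement about $\E_\pi[n_i(v)]$ (which is not needed for the theorem anyway). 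Your use of $H(W)\le\log N_i$ on the full walk, rather than the paper's two-step $H[\de_1\cdots\de_i\mid\de_0]\le \E[\log n_i(\de_0)]\le\log\E[n_i(\de_0)]$, is likewise equivalent after adding $\log|\vec{E}|$ to both sides.
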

We will first prove this theorem assuming the following lemma, which is the main technical part of Alon, Hoory and 
Linial~\cite{AHL02}. This lemma shows that the bounds (\ref{eq:moore1}) and (\ref{eq:moore2}) holds with $\delta$ replaced by $\avgd$. 
In Section \ref{subsec:entropy}, we will present an information theoretic proof of this lemma.
\begin{lemma} \label{lm:ahl}
Let $G$ be an undirected simple graph with $n$ vertices,
girth $g$, minimum degree at least two and average degree $\avgd$.
\begin{description} 
\item[(a)] If $v \in V(G)$ is chosen with distribution $\pi$, where
$\pi(v)=d_v/(2|E(G)|) = d_v/(\avgd n)$,
then $\E[n_i(v)] \geq \avgd(\avgd-1)^{i-1}$ $(i \geq 1)$.
\item[(b)] If $\de$ is a uniformly chosen random edge in $\vec{E}$, then
$\E[n_i(\de)] \geq  (\avgd-1)^i$ $(i \geq 0)$.
\end{description}
\end{lemma}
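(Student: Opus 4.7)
The plan is to apply the entropy method to non-returning random walks: express the entropy of a walk both as an explicit sum via the chain rule and as an upper bound in terms of $\log n_i(\cdot)$, then use Jensen's inequality twice---once in each direction---to convert between $\expt[\log]$ and $\log \expt$.

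For part (b), I would let $(V_0, V_1)$ be uniform over $\vec{E}$ and extend the walk non-returningly, choosing $V_j$ for each $j \geq 2$ uniformly among the neighbors of $V_{j-1}$ other than $V_{j-2}$; this is well-defined because the minimum degree is at least $2$. The first key observation is that the uniform distribution on $\vec{E}$ is stationary for this dynamics: conditioning on $(V_{j-2}, V_{j-1})$ being uniform and summing the transition probability $1/(d_v - 1)$ over the $d_v - 1$ admissible predecessors $u \in N(v) \setminus \{w\}$ gives $\Pr[(V_{j-1}, V_j) = (v, w)] = 1/|\vec{E}|$. Consequently the marginal distribution of $V_{j-1}$ is $\pi$ for every $j \geq 1$, and the chain rule yields
\[ H(V_2, \ldots, V_{i+1} \mid V_0, V_1) = \sum_{j=2}^{i+1} H(V_j \mid V_{j-2}, V_{j-1}) = i \, \expt_\pi[\log(d_v - 1)]. \]
Since the support of $(V_2, \ldots, V_{i+1})$ given $(V_0, V_1) = \de$ has exactly $n_i(\de)$ elements, the left-hand side is at most $\expt_{\de}[\log n_i(\de)]$ with $\de$ uniform on $\vec{E}$.

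The convexity step I would carry out next is $\expt_\pi[\log(d_v - 1)] \geq \log(\avgd - 1)$, equivalently $\sum_v d_v \log(d_v - 1) \geq n \avgd \log(\avgd - 1)$. The function $x \mapsto x \log(x - 1)$ has second derivative $(x - 2)/(x - 1)^2$, which is nonnegative on $[2, \infty)$, so this is Jensen's inequality with uniform weight over vertices. Combining with $\log \expt \geq \expt \log$ and exponentiating yields (b). Part (a) is entirely parallel: start with $V_0 \sim \pi$ (so that $(V_0, V_1)$ is again uniform on $\vec{E}$), use $H(V_1 \mid V_0) = \expt_\pi[\log d_v]$ together with the standard convexity of $x \log x$ to obtain $\expt_\pi[\log d_v] \geq \log \avgd$, and add the $(i - 1) \log(\avgd - 1)$ contribution from the remaining non-returning steps.

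The technical point I expect to require the most care is the convexity of $x \log(x - 1)$, together with the verification that the uniform distribution on $\vec{E}$ is stationary under non-returning steps. The convexity fails for $x < 2$ (and the function is undefined at $x = 1$), which is precisely why the hypothesis ``minimum degree at least $2$'' is essential---without it a single degree-$1$ vertex would both force the walk to terminate and break the Jensen step. Once these two ingredients are in hand, the remainder is routine application of the chain rule and Jensen's inequality.
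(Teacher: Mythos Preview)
Your argument is correct and matches the paper's proof in all essentials: the same non-returning random walk, the same stationarity observation for the uniform distribution on $\vec{E}$, the chain rule computation of the conditional entropy, the support-size upper bound, and Jensen's inequality in both directions. The only cosmetic difference is that the paper bundles the final convexity step for part~(a) into a single application of Jensen to the function $x\mapsto x\log\bigl(x(x-1)^{i-1}\bigr)$, whereas you handle $\E_\pi[\log d_v]$ and $\E_\pi[\log(d_v-1)]$ separately via the convexity of $x\log x$ and $x\log(x-1)$; both versions are valid and yield the same bound.
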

\begin{proof}[Proof of Theorem~\ref{thm:ahl}]
First, consider graphs with odd girth. From Observation~\ref{ob:moore},
Lemma~\ref{lm:ahl} (a) and linearity of expectation we obtain
\[ n \geq \E[n_0(v)+n_1(v) + \cdots + n_r(v)]  \geq 1+\avgd\sum_{i=0}^{r-1}(\avgd-1)^{i},\]
where $v \in V(G)$ is chosen with distribution $\pi$ (defined
in Lemma~\ref{lm:ahl} (a)). 

Now, consider graphs with even girth. Let $\de_1$ be chosen uniformly at random from $\vec{E}$ and let $\de_2$ be its
companion edge (going in the opposite direction). Note that $\de_2$ is also uniformly distributed in $\vec{E}$. Then, from Observation~\ref{ob:moore},
Lemma~\ref{lm:ahl} (b) and linearity of expectation we obtain
\[ n \geq \E\left[ 
\sum_{i=0}^r [n_i(\de_1) + n_i(\de_2)]
\right]  ~\geq~ 2\sum_{i=0}^{r-1} (\avgd-1)^i.
\]
\end{proof}

\subsection{The entropy based proof of Lemma~\ref{lm:ahl}}
\label{subsec:entropy}
The proof of Lemma~\ref{lm:ahl} below is essentially the same as the
one originally proposed by Alon, Hoory and Linial, but is stated in
the language of entropy where some of the arguments based on concavity
are explained directly in information theoretic terms.

\begin{proof}[Proof of Lemma~\ref{lm:ahl}] 
\begin{description}
\item[(a)] Consider the Markov process $v$, $\de_1$ ,$\de_2$, \ldots,
$\de_i$, where $v$ is a random vertex of $G$ chosen with distribution
$\pi$, $\de_1$ is a random edge of $\dG$ leaving $v$ (chosen uniformly
from the $d_v$ choices), and for $1\leq j < i$, $\de_{j+1}$ is a random
successor edge for $\de_j$ chosen uniformly from among the
non-returning possibilities. (If $\de_j$ has the form $(x,y)$, then
there are $d_y-1$ possibilities for $\de_{j+1}$.) Let $v_0=v, v_1,
v_2, \ldots, v_i$ be the vertices visited by this non-returning
walk.  We observe that each $\de_j$ is distributed uniformly in the
set $E(\dG)$ and each $v_j$ has distribution $\pi$.  Then,
\begin{eqnarray*}
\log \E[n_i(v)] & \geq & \E[\log n_i(v)] \\
               & \geq & H[\de_1\de_2 \ldots \de_i \mid v] \\
               &  =   & H[\de_1 | v ] + \sum_{j=1}^{i-1} H[\de_{j+1}  \mid \de_1\de_2 \cdots \de_{j}v]\\
               &  =   & \E[\log d_v]  + \sum_{j=1}^{i-1} \E[\log (d_{v_j}-1)]\\
               &  =   & \E[\log d_v(d_v -1)^{i-1}]\\
               &  =   & \frac{1}{\avgd n} \sum_{v} d_v\log d_v(d_v -1)^{i-1}\\  
               &  \geq   & \log \avgd (\avgd -1)^{i-1},
\end{eqnarray*}
where to justify the first inequality we use Jensen's inequality for
the concave function $\log$, to justify the second we use the fact
that the entropy of a random variable is at most the log of the size
of its support, and to justify the last we use Jensen's inequality for
the convex function $x \log x (x-1)^{i-1}$ ($x\geq 2$). The claim
follows by exponentiating both sides.

\item[(b)] This time we consider the Markov process $\de_0=\de$,
$\de_1$, \ldots, $\de_i$, where $\de$ is chosen uniformly at random
from $\vec{E}$, and for $0 \leq j < i$, $\de_{j+1}$ is a random successor
edge for $\de_j$ chosen uniformly from among the non-returning
possibilities.  Let $v_0, v_1, v_2, \ldots, v_{i+1}$ be the vertices visited 
by this non-returning walk.  As before observe that
each $v_j$ has distribution $\pi$. Then,
\begin{eqnarray*}
\log \E[n_i(e)] & \geq & \E[\log n_i(e)] \\
                & \geq & H[\de_1\de_2 \ldots \de_{i} \mid \de_0] \\
                &  =   & \sum_{j=1}^{i} \E[\log (d_{v_j}-1)]\\
                &  =   & \E[\log (d_{v_0} -1)^i]\\
                &  = & \frac{1}{\avgd n} \sum_v d_v\log (d_v -1)^{i}\\
                &  \geq   & \log (\avgd -1)^{i},
\end{eqnarray*}
where we justify the first two inequalities as before, and the last using
Jensen's inequality applied to the convex function
$x \log (x-1)^{i}$ ($x\geq 2$). The claim follows by exponentiating
both sides.
\end{description}
\end{proof}
\begin{remark}
 Theorem~\ref{thm:ahl} holds for any graph with average degree is at least 2.
For details, see the proof of Theorem 1 in~\cite{AHL02}.
\end{remark} 

\section{Moore bound for bipartite graphs}
\label{sec:moore_bipartite}
Following the proof technique of~\cite{AHL02}, Hoory~\cite{Hoory02} obtained an improved Moore bound for bipartite graphs.
In this section, we provide an information theoretic proof of the same.
\subsection{The Hoory bound}
\begin{theorem}[Hoory~\cite{Hoory02}] \label{thm:h}
Let $G=(V_L,V_R,E)$ be a bipartite graph of girth $g=2r$, with $n_L=|V_L|$ and $n_R=|V_R|$,
minimum degree at least $2$ and the left and right average degrees $d_L$ and $d_R$. Then,
\begin{eqnarray*}
\displaystyle n_L&\geq& \sum_{i=0}^{r-1}(d_R-1)^{\lceil \frac{i}{2}\rceil}(d_L-1)^{\lfloor \frac{i}{2}\rfloor},\\
\displaystyle n_R&\geq& \sum_{i=0}^{r-1}(d_L-1)^{\lceil \frac{i}{2}\rceil}(d_R-1)^{\lfloor \frac{i}{2}\rfloor}.
\end{eqnarray*}
\end{theorem}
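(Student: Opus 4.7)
The plan is to mirror the two-step structure of Section~\ref{subsec:ahl}: first state a bipartite analogue of Observation~\ref{ob:moore} that expresses $n_L$ (and $n_R$) as a sum of non-returning-walk counts from a fixed edge, sorted by the parity of the walk length; and then prove a bipartite version of Lemma~\ref{lm:ahl}(b) via the same entropy argument, but now keeping track of which side of the bipartition each visited vertex lies in.

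For the combinatorial step, fix an edge $e$ of $G$ with directed versions $\de \in \Elr$ and $\de' \in \Erl$. The girth hypothesis $g=2r$ still forces all endpoints of non-returning walks of length at most $r$ starting with $\de$ or $\de'$ to be distinct, exactly as in Observation~\ref{ob:moore}. What is new is the parity: a walk of $i+1$ edges starting with $\de$ ends in $V_L$ iff $i$ is odd, whereas a walk of $i+1$ edges starting with $\de'$ ends in $V_L$ iff $i$ is even. Sorting endpoints by side yields
\[
n_L \;\geq\; \sum_{\substack{0 \leq i \leq r-1 \\ i \text{ odd}}} n_i(\de) \;+\; \sum_{\substack{0 \leq i \leq r-1 \\ i \text{ even}}} n_i(\de'),
\]
with the mirror statement for $n_R$ obtained by swapping $\de$ and $\de'$.

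For the entropy step, I would draw $\de$ uniformly from $\Elr$ and consider the non-returning walk $\de_0=\de, \de_1, \ldots, \de_i$ with visited vertices $v_0, v_1, \ldots, v_{i+1}$. The same stationarity argument as in Lemma~\ref{lm:ahl}(b) shows that $\de_j$ is uniform in $\Elr$ when $j$ is even and in $\Erl$ when $j$ is odd, so $v_j$ has distribution $\pi_L(v)=d_v/|E|$ on $V_L$ for $j$ even and $\pi_R(v)=d_v/|E|$ on $V_R$ for $j$ odd. The entropy chain is unchanged:
\[
\log \E[n_i(\de)] \;\geq\; H[\de_1 \cdots \de_i \mid \de_0] \;=\; \sum_{j=1}^{i} \E[\log(d_{v_j}-1)].
\]
The final Jensen step is now applied twice, separately on $V_L$ and on $V_R$ using convexity of $x \mapsto x\log(x-1)$ on $[2,\infty)$, giving $\E_{\pi_L}[\log(d_v-1)] \geq \log(d_L-1)$ and $\E_{\pi_R}[\log(d_v-1)] \geq \log(d_R-1)$. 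Since $\lceil i/2\rceil$ of the indices $j \in \{1,\ldots,i\}$ are odd (contributing $\log(d_R-1)$) and $\lfloor i/2\rfloor$ are even (contributing $\log(d_L-1)$), exponentiating yields $\E[n_i(\de)] \geq (d_R-1)^{\lceil i/2\rceil}(d_L-1)^{\lfloor i/2\rfloor}$, and the symmetric computation with $\de'$ uniform in $\Erl$ gives $\E[n_i(\de')] \geq (d_L-1)^{\lceil i/2\rceil}(d_R-1)^{\lfloor i/2\rfloor}$.

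Averaging the combinatorial inequality over a uniformly chosen edge of $G$ makes $\de$ uniform in $\Elr$ and $\de'$ uniform in $\Erl$ simultaneously, so linearity of expectation combines the two walk-count bounds: the odd-$i$ terms from $\de$ contribute $(d_R-1)^{(i+1)/2}(d_L-1)^{(i-1)/2}$ and the even-$i$ terms from $\de'$ contribute $(d_L-1)^{i/2}(d_R-1)^{i/2}$, which together recover exactly $\sum_{i=0}^{r-1}(d_R-1)^{\lceil i/2\rceil}(d_L-1)^{\lfloor i/2\rfloor}$, as required. The bound on $n_R$ follows by the analogous computation with the roles of $V_L$ and $V_R$ swapped. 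The main obstacle I anticipate is not the entropy calculation --- which is a routine bipartite adaptation --- but the book-keeping of parities in the combinatorial observation: the asymmetric exponent pattern in Theorem~\ref{thm:h} distinguishes $\de$ from $\de'$ in a way that is invisible in the non-bipartite case, and one has to match the odd-$i$ part of the observation to $\de$ and the even-$i$ part to $\de'$ so that the exponents line up correctly.
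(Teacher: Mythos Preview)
Your proposal is correct and follows essentially the same approach as the paper: the combinatorial observation you state is exactly the paper's Observation~\ref{ob:moore_bipartite} (just with the odd/even split written explicitly rather than via re-indexing), your entropy bounds specialize to the paper's Lemma~\ref{lm:h}, and the averaging step matches the paper's derivation of Theorem~\ref{thm:h}. If anything, your formulation $\E[n_i(\de)] \geq (d_R-1)^{\lceil i/2\rceil}(d_L-1)^{\lfloor i/2\rfloor}$ packages the two cases of the lemma into one statement, which makes the final summation more transparent.
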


For bipartite graphs the girth is always even. We have the following then have the following
variant of Observation~\ref{ob:moore}.
\begin{observation}\label{ob:moore_bipartite}
Let $G=(V_L,V_R,E)$ be an undirected bipartite graph with $|V_L|=n_L$ and $|V_R|=n_R$ and girth $g=2r$.
Let $e$ be an edge of $G$ and suppose $\de_1$ and $\de_2$ be its directed
versions in $\dG$, such that $\de_1 \in \Elr$ and $\de_2 \in \Erl$. Then,
\[ n_L \geq \sum_{i=0}^{\lfloor \frac{r-2}{2} \rfloor} n_{2i+1}(\de_1) + 
\sum_{i=0}^{\lceil \frac{r-2}{2} \rceil} n_{2i}(\de_2). \]
\end{observation}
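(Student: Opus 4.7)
The plan is to view Observation~\ref{ob:moore_bipartite} as the $V_L$-restriction of the even-girth case of Observation~\ref{ob:moore}: I take the collection of non-returning walks of length at most $r$ starting with $\de_1$ or $\de_2$, already known (via that earlier observation) to reach pairwise-distinct vertices of $G$, and then keep only those whose endpoint lies in $V_L$.

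The first step is a parity observation. Because $G$ is bipartite, the side of the bipartition in which a walk lands is determined by the parity of its length. A walk of length $k$ starting with $\de_1\in\Elr$ ends in $V_L$ iff $k$ is even, while a walk of length $k$ starting with $\de_2\in\Erl$ ends in $V_L$ iff $k$ is odd. Since $n_j(\de)$ counts walks with exactly $j+1$ edges, the endpoints landing in $V_L$ arise from $n_j(\de_1)$ with $j$ odd and from $n_j(\de_2)$ with $j$ even. The second step is just reindexing within $j \in \{0, 1, \ldots, r-1\}$: I write $j = 2i+1$ for the odd values and $j = 2i$ for the even values, and a short check on the two parities of $r$ confirms that $i$ ranges up to $\lfloor (r-2)/2 \rfloor$ in the first case and up to $\lceil (r-2)/2 \rceil$ in the second (and that every such walk indeed has length at most $r$, so the distinctness statement of Observation~\ref{ob:moore} applies).

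Putting these together, the $V_L$-endpoints of the walks counted in the two restricted sums are pairwise distinct, so their total count is at most $n_L$, which is exactly the claimed inequality. I do not anticipate a substantive obstacle: the distinctness of endpoints is handed to me by Observation~\ref{ob:moore}, and the only real work is the parity/indexing bookkeeping just described.
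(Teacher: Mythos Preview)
Your proposal is correct and is exactly the argument the paper has in mind: the observation is stated without proof as the ``variant of Observation~\ref{ob:moore}'' obtained by keeping only those endpoints that land in $V_L$, and your parity/indexing check makes this precise. There is nothing to add.
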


We will prove the Theorem \ref{thm:h}, assuming the following lemma, which is the main technical part of 
Hoory~\cite{Hoory02}. In Section \ref{subsec:entropy_h}, we will present the proof of this lemma using the language of entropy.

\begin{lemma} \label{lm:h}
Let $G=(V_L,V_R,E)$ be an undirected simple bipartite graph with $n_L$ vertices on the left and $n_R$ 
vertices on the right,
girth $g$, minimum degree at least two and average left and right degrees respectively $d_L$ and $d_R$.
\begin{description} 
\item[(a)] If $\de$ is a uniformly chosen random edge in $\Elr$, 
then $\E[n_{2i+1}(\de)] \geq  (d_R-1)^{i+1}(d_L-1)^{i}$ $(i \geq 1)$.
\item[(b)] If $\de$ is a uniformly chosen random edge in $\Erl$, 
then $\E[n_{2i}(\de)] \geq  (d_R-1)^{i}(d_L-1)^{i}$ $(i \geq 1)$.
\end{description}
\end{lemma}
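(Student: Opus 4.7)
The plan is to mimic the entropy argument used for Lemma~\ref{lm:ahl}(b), but keep track of which side of the bipartition each vertex of the non-returning walk lies on. For part (a), I would set up the Markov process $\de_0 = \de, \de_1, \ldots, \de_{2i+1}$ with $\de_0$ uniform on $\Elr$ and each $\de_{j+1}$ a uniformly random non-returning successor of $\de_j$; writing $v_0, v_1, \ldots, v_{2i+2}$ for the visited vertices, Jensen applied to $\log$ together with the support bound on entropy yields
\[ \log \E[n_{2i+1}(\de_0)] \ \geq\ H[\de_1 \cdots \de_{2i+1} \mid \de_0] \ =\ \sum_{j=1}^{2i+1} \E[\log(d_{v_j}-1)], \]
exactly as in that earlier proof, reducing the task to lower-bounding each summand.

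First I would verify the bipartite analogue of the uniformity claim driving the non-bipartite proof: since $\de_0$ is uniform on $\Elr$, the edge $\de_j$ is uniform on $\Elr$ for even $j$ and uniform on $\Erl$ for odd $j$. This is a routine one-step computation, and from it one gets that for odd $j$ the tail $v_j \in V_R$ has distribution $d_y/(d_R n_R)$, while for even $j \geq 2$ the tail $v_j \in V_L$ has distribution $d_x/(d_L n_L)$. Jensen's inequality for the convex function $x \log(x-1)$ on $\{x \geq 2\}$ then delivers, exactly as in Lemma~\ref{lm:ahl}(b), the bounds $\E[\log(d_{v_j}-1)] \geq \log(d_R - 1)$ when $v_j$ is drawn from the right marginal and $\geq \log(d_L - 1)$ when drawn from the left.

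The rest is bookkeeping. In the sum for part (a) the $i+1$ odd indices $j=1,3,\ldots,2i+1$ each contribute $\log(d_R - 1)$ and the $i$ even indices $j=2,4,\ldots,2i$ each contribute $\log(d_L - 1)$, yielding $(d_R - 1)^{i+1}(d_L - 1)^i$ after exponentiation. Part (b) runs along the same lines starting from $\de_0$ uniform on $\Erl$ with a walk of length $2i$; now the index count gives $i$ contributions of each type and the bound $(d_R - 1)^i(d_L - 1)^i$. I do not expect any real obstacle: the only delicate point is the alternating-uniformity of the $\de_j$, since this is what lets the $v_j$ inherit the correct marginal distributions before Jensen is invoked. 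Partitioning the entropy sum according to the side of $v_j$ is precisely the mechanism that generates the asymmetric exponents $\lceil i/2 \rceil$ and $\lfloor i/2 \rfloor$ that appear in Theorem~\ref{thm:h} once this lemma is combined with Observation~\ref{ob:moore_bipartite}.
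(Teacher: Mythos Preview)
Your proposal is correct and essentially identical to the paper's own proof: the same Markov chain of non-returning edges, the same alternating-uniformity observation for the $\de_j$ (hence the degree-biased marginals for the $v_j$ on each side), and the same pair of Jensen applications. The only cosmetic difference is that the paper groups the conditional-entropy terms by parity up front, whereas you write a single sum $\sum_{j=1}^{2i+1}\E[\log(d_{v_j}-1)]$ and then split by parity; the content is the same.
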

\begin{proof}[Proof of Theorem~\ref{thm:h}]
We will prove the bound for $n_L$. The proof for $n_R$ case is similar. Let $\de_1$ be chosen uniformly at random from 
$\Elr$ and let $\de_2$ be its companion edge (going in the opposite direction). Note that $\de_2$ is also uniformly
distributed in $\Erl$. Then, from Observation~\ref{lm:h},
Lemma~\ref{lm:h} and linearity of expectation we obtain
\[n_L \geq  \E\left[ 
\sum_{i=0}^{\lfloor \frac{r-2}{2} \rfloor} n_{2i+1}(\de_1) + 
\sum_{i=0}^{\lceil \frac{r-2}{2} \rceil} n_{2i}(\de_2)
\right]  ~\geq~ \sum_{i=0}^{r-1}(d_R-1)^{\lceil \frac{i}{2}\rceil}(d_L-1)^{\lfloor \frac{i}{2}\rfloor}.
\]

\end{proof}
\subsection{The entropy based proof of Lemma \ref{lm:h}}
\label{subsec:entropy_h}
The proof of Lemma~\ref{lm:h} below is essentially the same as the
one originally proposed by Hoory, but is stated in
the language of entropy where some of the arguments based on concavity
are explained directly in information theoretic terms.

\begin{proof}[Proof of Lemma~\ref{lm:h}] 
\begin{description} 
\item[(a)] Consider a Markov process $\de_0, \de_1, \de_2, \cdots,
\de_{2i+1}$, where $\de_0$ is a uniformly chosen random edge from
$\Elr$, and for $0 \leq j < 2i+1$, $\de_{j+1}$ is a random successor
edge for $\de_j$ chosen uniformly from among the non-returning
possibilities. Let $v_0, v_1, v_2, \ldots, v_{2i+2}$ be the vertices
visited by this non-returning walk. We observe that for $0\leq j \leq
i$ each $\de_{2j}$ and $\de_{2j+1}$ is respectively distributed
uniformly in the set $\Elr$ and $\Erl$.  Furthermore, for $j$ even,
$\Pr[v_j =v] = d_{v}/|E(G)|$ for all $v \in V_L$, and for $j$ odd,
$\Pr[v_j =v] = d_{v}/|E(G)|$ for all $v \in V_R$.  Then,
\begin{eqnarray*}
\log \E[n_{2i+1}(e)] & \geq & \E[\log n_{2i+1}(e)] \\
                & \geq & H[\de_0 \de_1 \ldots \de_{2i+1} \mid \de_0] \\
                &  =   & \sum_{j=0}^{i}H[\de_{2j+1}|\de_{2j}]
                        +\sum_{j=1}^{i}H[\de_{2j}|\de_{2j-1}]\\
                &  =   & \sum_{j=0}^{i} \E[\log (d_{v_{2j+1}}-1)]
                        + \sum_{j=1}^{i} \E[\log (d_{v_{2j}}-1)]\\
                &  \geq   & (i+1)\log (d_R-1) + i\log(d_L-1)\\
                &  =   & \log (d_R-1)^{i+1}(d_L-1)^{i}.
\end{eqnarray*}
where to justify the first inequality we use Jensen's inequality for
the concave function $\log$, to justify the second we use the fact
that the entropy of a random variable is at most the log of the size
of its support, and to justify the last we use Jensen's inequality for
the convex function $x \log (x-1)$ ($x\geq 2$). The claim
follows by exponentiating both sides.

\item[(b)] Similarly,
  \begin{equation*}
    \log \E[n_{2i}(e)] \geq \log (d_L-1)^i(d_R-1)^i.
  \end{equation*}
\comment{Consider a Markov process $\de_0, \de_1, \cdots, \de_{2i}$,
where $\de_0$ is a uniformly chosen random edge from $\Erl{RL}$, and
for $0 \leq j < 2i$, $\de_{j+1}$ is a random successor edge for
$\de_j$ chosen uniformly from among the non-returning
possibilities. Let $v_0, v_1, v_2, \ldots, v_{2i+1}$ be the vertices
visited by this non-returning walk.  We observe that for $0\leq j \leq
i$ each $\de_{2j}$ and $\de_{2j+1}$ is respectively distributed
uniformly in the set $\Erl{RL}(\dG)$ and $\Elr(\dG)$.  Furthermore, for
$j$ even, $\Pr[v_j =v] = d_{v}/|E(G)|$ for all $v \in V_L$, and
for $j$ odd, $\Pr[v_j =v] = d_{v}/|E(G)|$ for all $v \in V_R$.
  Then,
\begin{eqnarray*}
\log \E[n_{2i}(e)] & \geq & \E[\log n_{2i}(e)] \\
                & \geq & H[\de_0\de_1 \ldots \de_{2i} \mid \de_0] \\
                &  =   & \sum_{j=0}^{i-1}H[\de_{2j+1}|\de_{2j}]
                        +\sum_{j=1}^{i}H[\de_{2j}|\de_{2j-1}]\\
                &  =   & \sum_{j=0}^{i-1} \E[\log (d_{v_{2j+1}}-1)]
                        + \sum_{j=1}^{i} \E[\log (d_{v_{2j}}-1)]\\
                &  \geq   & i\log (d_L-1) + i\log(d_R-1)\\
                &  =   & \log (d_L-1)^i(d_R-1)^i.
\end{eqnarray*}}
\end{description}
\end{proof}

\comment{
\subsection{Moore Bound for Hypergraphs}
A hypergraph $\mathcal{H}=(V,E)$ can be represented by an undirected bipartite graphs $G=(V,E,E_G)$, where
$\{v,e\} \in E_G$ iff $v \in e$ in $\mathcal{H}$. Thus if $\mathcal{H}$ has girth $g$, average cardinality 
 $\bar{c}$ and average degree $\avgd$, then from theorem \ref{thm:h} we get 
\[|V| \geq \sum_{i=0}^{g-1}(\bar{c}-1)^{\lceil \frac{i}{2}\rceil}(\avgd-1)^{\lfloor \frac{i}{2}\rfloor}, \]}

\bibliographystyle{plain}
\bibliography{bibfile}
\end{document}